\def\BState{\State\hskip-\ALG@thistlm}
\DeclareMathOperator\erfc{erfc}
\newtheorem{cor}{Corollary}
\newtheorem{ppro}{Proposition}
\newtheorem{theorem}{Theorem}
\newtheorem{corollary}{Corollary}
\def\ScaleIfNeeded{%
\ifdim\Gin@nat@width>\linewidth \linewidth \else \Gin@nat@width
\fi } \makeatother
\begin{document}

\title{  3D Stochastic Geometry Model for Large-Scale  Molecular Communication Systems}

\author{
\IEEEauthorblockN{ Yansha Deng\IEEEauthorrefmark{1}, Adam~Noel\IEEEauthorrefmark{2}, Weisi~Guo\IEEEauthorrefmark{3}, 
Arumugam~Nallanathan\IEEEauthorrefmark{1}, 
and  Maged~Elkashlan\IEEEauthorrefmark{4}  
 } \IEEEauthorblockA{
\IEEEauthorrefmark{1}Department of Informatics, King's College London,  UK\\
\IEEEauthorrefmark{2} School of Electrical Engineering and Computer Science, University of Ottawa, Canada \\
\IEEEauthorrefmark{3} School of Engineering, University of Warwick, UK\\
\IEEEauthorrefmark{4} School of Electronic Engineering and Computer
Science, Queen Mary University of London, UK\\
 } }


\newcommand{\meter}{\textnormal{m}}
\newcommand{\micron}{\mu\textnormal{m}}
\newcommand{\second}{\textnormal{s}}

\newcommand{\edit}[2]{#1}

\maketitle
\vspace{-3cm}

\begin{abstract}
Information delivery using chemical molecules is an integral part of biology at multiple distance scales and has attracted recent interest in bioengineering and communication.  The  collective signal strength at the receiver (i.e., the expected number of observed molecules inside the receiver), resulting from a large number of transmitters at random distances (e.g., due to mobility), can have a major impact on the reliability and efficiency of the molecular communication system.  Modeling the collective signal from multiple diffusion sources can be computationally and analytically challenging. In this paper, we present the first tractable analytical model for the collective signal strength due to randomly-placed transmitters, whose positions are modelled as a homogeneous Poisson point process in three-dimensional (3D) space. By applying  stochastic geometry, we derive   analytical expressions for the expected  number of observed molecules at a fully absorbing receiver and a passive receiver. Our results reveal that the  collective signal strength at both types of receivers increases proportionally with increasing transmitter density.  The proposed  framework dramatically simplifies the  analysis of large-scale molecular systems in both communication and biological applications.

\end{abstract}

\begin{keywords}
molecular communications, absorbing receiver, passive receiver, stochastic geometry, interference modeling. 
\end{keywords}

\section{Introduction}

Molecular communication via diffusion has attracted significant bioengineering and communication engineering research interest in recent years \cite{EckfordBook13}.  Messages are delivered via molecules undergoing random walks \cite{Codling08}, a prevalent phenomenon in biology \cite{Atkinson09}. In fact, molecular communication exists in nature at both the nano- and macro-scales, offering transmit energy and signal propagation advantages over wave-based communications \cite{Llatser13,Guo15TMBMC}. One application example is that  swarms of nano-robots can track specific targets, such as a tumour cells, to perform operations such as targeted drug delivery \cite{Douglas12}.  In order to do so, energy efficient and tether-less communications between the nano-robots must be established in biological conditions \cite{Cavalcanti06}, and possibly additional nano-bio-interfaces need to be implemented \cite{Kirkpatrick10}. 

Fundamentally, molecular communications involves modulating information onto the property of a single or a group of molecules (e.g.,  number, type, emission time). 
When modulating  the number of  molecules, each messenger node will transmit information-bearing molecules via chemical pulses. 
 In a realistic  environment with a swarm of robots (i.e., messenger nodes)  operating together, they are likely to transmit molecular messages simultaneously.  Due to limitations in transmitter design and molecule type availability, it is likely that many transmitters will transmit the same type of information molecule.
Thus,  it is important to model the  collective signal strength due to all transmitters with the same type of information molecule, and to account for random transmitter locations due to mobility.

Existing works have largely focused on  modeling:  1) the signal strength  of a point-to-point communication channel by considering the self-interference that arises from adjacent symbols (i.e., inter-symbol-interference (ISI)) at a passive receiver \cite{noel2014unify}, at a fully absorbing receiver \cite{yilmaz2014simulation}, and at a reversible adsorption receiver \cite{Yansha16}; and 2) the collective signal strength  of a multi-access communication channel at the passive receiver due to co-channel transmitters (i.e., transmitters emitting the same type of molecule) with the given knowledge of their total  number and location \cite{noel2014unify}.

The first work to  consider randomly distributed co-channel transmitters in 3-D space according  to a spatial homogeneous Poisson process (HPPP) is \cite{pierobon2014statistical}, where the probability density function (PDF) of the received signal at a point location was derived based on the assumption of  white Gaussian transmit signals.
 Since the receiver size was negligible, the placement of transmitters did not need to accommodate the receiver's location. More importantly, only the  Monte Carlo simulation, and not particle-based simulation, was performed to verify the derived PDF.

From the perspective of receiver type, many works have focused on the passive receiver, which can observe and count the number of molecules inside the receiver without interfering with the molecules \cite{noel2014unify,pierobon2014statistical}. In nature, receivers commonly remove information molecules from the environment once they bind to a receptor. One example is the fully absorbing receiver, which absorbs all the molecules hitting its surface \cite{yilmaz2014simulation,Yansha16}. 
However, no work has studied the channel characteristics and the received signal at the fully absorbing receiver in a large-scale molecular communication system, let alone its comparison with 
that at the passive receiver.

In this paper, we model the collective signal strength at the passive receiver and fully absorbing receiver  due to  a swarm of mobile point transmitters that simultaneously emit a given number  of information molecules.   Unlike \cite{pierobon2014statistical}, which  focused on the  statistics of the received signal at any point location, we focus on examining and deriving  exact expressions for the expected number of molecules observed inside two types of receiver for signal demodulation. This is achieved 
using   stochastic geometry, which has been extensively  used to model and provide simple and tractable results for  wireless  systems \cite{baccelli2009stochastic}.  Our contributions can be summarized as follows:
\begin{enumerate}
\item We use stochastic geometry to model the collective signal at a receiver in a large-scale molecular communication system, where the receiver is either passive or fully absorbing. We distinguish between the desired signal   due to the nearest transmitter and the interfering  signal due to the other   transmitters.
\item We derive a simple closed-form expression for the expected  
number of  molecules absorbed at the fully absorbing
receiver, and a
 tractable expression for the expected number of 
molecules observed inside the passive receiver at any time
instant.
\item We  define and derive  tractable analytical expressions for   the fraction of molecules due to the nearest transmitter and the fraction of molecules due to the other transmitters.
\item We verify our results using  particle-based simulation and Monte Carlo simulation, which prove that the  expected number of molecules observed at both types of receiver increases linearly with increasing transmitter density.
\end{enumerate}

\section{System Model}

We consider a large-scale molecular communication system with a single receiver in which a swarm of point transmitters are  spatially distributed outside the receiver in  $\mathbb{R}^3/ {V_{{\Omega _{{r_r}}}}}$ according to an independent and HPPP $\Phi$ with density $\lambda$, where ${V_{{\Omega _{{r_r}}}}}$ is the volume of receiver ${{\Omega _{{r_r}}}}$.  This spatial distribution, which was previously used to model wireless sensor networks \cite{yan2016sensor},  cellular networks \cite{baccelli2009stochastic} and heterogenous cellular networks \cite{yan16hetnet},   has also been applied to model  bacterial colonies in \cite{Jeanson11} and the  interference sources in a  molecular communication system \cite{pierobon2014statistical}.  We consider a fluid environment  in the absence of flow currents: the extension for flow currents will be treated in future work.
 
 At any given time instant, a number of transmitters will be either silent or active. Thus, we define the activity probability of a transmitter that is triggered to transmit data  as ${\rho_a} (0<{\rho_a}<1)$. This activity probability  is independent of the receiver's location. Thus, the active point transmitters constitute independent HPPPs $\Phi_{a}$ with intensities $\lambda_a =\lambda \rho_{a}$. Each transmitter  transmits molecular signal pulses with amplitude $N_{\rm{tx}}^{\rm{FA}}$ ($N_{\rm{tx}}^{\rm{PS}}$)  to the absorbing receiver (the passive receiver).  
 We  assume  the existence of  a global clock such that  all molecule  emissions can only occur at $t=0$.
 
  We consider two types of spherical receiver with radius $r_r$: 1) Fully absorbing receiver \cite{Yilmaz14}, and 2) Passive receiver \cite{Llatser13,noel2014improving}. To equivalently compare them, we assume both types of receiver are capable of  counting the number of information molecules within the receiver volume at any time instant for information decoding.



It is well known that the distance between the transmitter and the receiver  in molecular communication is the main contributor to the degradation of   the signal strength (i.e., the number of molecules observed at the receiver). Instinctively,
we assume that the receiver is associated with the nearest transmitter  to obtain the strongest signal.
Thus, the messenger molecules transmitted by other active point transmitters  act as  interference, which impairs the correct reception at the  receiver. To measure this impairment,  we formulate the desired signal, the interfering signal for  the absorbing receiver and  the passive receiver in the following subsections.

\subsection{Absorbing Receiver}



 \begin{figure}[t]
	\centering
	\includegraphics[width=0.60\linewidth]{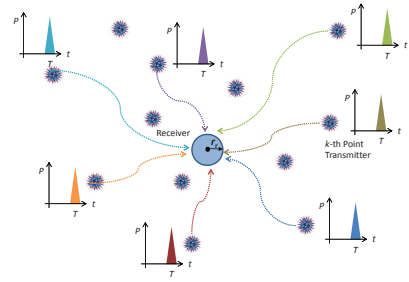}
	\caption{Illustration of a receptor receiving molecular pulse signals from point transmitters at different distances.}
	\label{System}
\end{figure}

In our proposed large-scale molecular communication system, let us consider the center of an absorbing receiver located at the origin. Using  the Slivnyak-Mecke's theorem \cite{baccelli2009stochastic}, the fraction $F^{\rm{FA}}$ of molecules absorbed at the  receiver until time $T$ due to an \emph{arbitrary} point transmitter $x$ at the location $\textbf{x}$ with  molecule emission occurring at $t=0$ can be represented  as \cite{Yilmaz14}
\begin{align}
F^{\rm{FA}}\left( {\left. {{\Omega _{{r_r}}},T} \right|\left\| \textbf{x} \right\|} \right) = \frac{{{r_r}}}{{\left\| \textbf{x} \right\| }}\erfc\Big\{ {\frac{{\left\| \textbf{x} \right\|-{r_r}}}{{\sqrt {4DT} }}}  \Big\}, \label{Fraction_x}
\end{align}
where $\left\| \textbf{x} \right\|$ is the distance between the point transmitter and the center of the  receiver where the transmitters follow a HPPP,
and  $D$ is the constant diffusion coefficient, which is usually obtained via experiment as in \cite[Ch. 5]{cussler2009diffusion}. 
%
%
The fraction $F_{\rm{u}}^{{\rm{FA}}}$ of  molecules absorbed inside the  receiver  until time $T$ due to a single  pulse emission by the \emph{nearest} active transmitter  can be represented as
\begin{align}
{F_{\rm{u}}^{{\rm{FA}}}\left( {\left. {{\Omega _{{r_r}}},T} \right|{{\left\| \textbf{x}^* \right\|}}} \right)}  = \frac{{{r_r}}}{{{{\left\| \textbf{x}^* \right\|}}}}\erfc\Big\{ {\frac{{{{\left\| \textbf{x}^* \right\|}} - {r_r}}}{{\sqrt {4DT} }}} \Big\}, \label{useful}
\end{align}
where $\left\| \textbf{x}^* \right\|$ denotes the distance between   the receiver and the nearest  transmitter,
\begin{align}\label{cell_selection_II}
 \quad x^{*} = \mathop {\arg \min }\limits_{x \in {\Phi _a}}  \left\| {{\textbf{x}}} \right\|, 
\end{align}
 $x^{*}$  denotes  the nearest point  transmitter  for the receiver, and ${\Phi _a}$ denotes the  set of active transmitters' positions.

The fraction  of  molecules absorbed at the  receiver until time $T$ due to  single  pulse emissions at each  active interfering transmitter  $F_{\rm{I}}^{{\rm{FA}}}$ and that due to a single  pulse emission at each active  transmitter $F_{\rm{all}}^{{\rm{FA}}}$ is represented as
\begin{align}
 {F_{\rm{I}}^{{\rm{FA}}}\left( {\left. {{\Omega _{{r_r}}},T} \right|\left\| \textbf{x} \right\|} \right)}  = \sum\limits_{{{{\Phi _a}} \mathord{\left/
 {\vphantom {{{\Phi _a}} {{\textbf{x}^*}}}} \right.
 \kern-\nulldelimiterspace} {{\textbf{x}^*}}}} {\frac{{{r_r}}}{{\left\| \textbf{x} \right\|}}\erfc\Big\{ {\frac{{\left\| \textbf{x} \right\| - {r_r}}}{{\sqrt {4DT} }}} \Big\}}, \label{interference}
\end{align}
and
\begin{align}
 {F_{\rm{all}}^{{\rm{FA}}}\left( {\left. {{\Omega _{{r_r}}},T} \right|\left\| \textbf{x} \right\|} \right)}  = \sum\limits_{{{{\Phi _a}} }} {\frac{{{r_r}}}{{\left\| \textbf{x} \right\|}}\erfc\Big\{ {\frac{{\left\| \textbf{x} \right\| - {r_r}}}{{\sqrt {4DT} }}} \Big\}}, \label{total_FA}
\end{align}
respectively.

The  expected   number of  molecules absorbed at the receiver by time $T$  due to  all active  transmitters is equivalently    the expected   number of  molecules absorbed at the receiver until time $T$, which can be   calculated as 
\begin{align}
{\mathbb{E}}&\left\{ {N_{{\rm{all}}}^{{\rm{FA}}}\left( { {{\Omega _{{r_r}}},T} } \right)} \right\}  
= N_{\rm{tx}}^{\rm{FA}} {F_{\rm{all}}^{{\rm{FA}}}\left( {\left. {{\Omega _{{r_r}}},T} \right|{{\left\| \textbf{x} \right\|}}} \right)} 
\nonumber\\ & =\underbrace {N_{{\rm{tx}}}^{{\rm{FA}}}F_{\rm{u}}^{{\rm{FA}}}\left( {\left. {{\Omega _{{r_r}}},T} \right|\left\| {{{\bf{x}}^*}} \right\|} \right)}_{{\mathbb{E}}_{\rm{u}}^{{\rm{FA}}}} +\underbrace {N_{{\rm{tx}}}^{{\rm{FA}}}F_{\rm{I}}^{{\rm{FA}}}\left( {\left. {{\Omega _{{r_r}}},T} \right|\left\| {{{\bf{x}}}} \right\|} \right)}_{{\mathbb{E}}_{\rm{I}}^{{\rm{FA}}}}
 , 
\end{align}
where $ {F_{\rm{all}}^{{\rm{FA}}}\left( {\left. {{\Omega _{{r_r}}},T} \right|{{\left\| \textbf{x} \right\|}}} \right)}$ is given in \eqref{total_FA}, 
 ${{\mathbb{E}}_{\rm{u}}^{{\rm{FA}}}}$ is the fraction of  absorbed molecules at the absorbing receiver until time $T$ due to the nearest transmitter, and ${{\mathbb{E}}_{\rm{I}}^{{\rm{FA}}}}$ is the fraction of  absorbed molecules at the absorbing receiver until time $T$ due to the other (interfering) transmitters.  

\subsection{Passive Receiver}
In a point-to-point molecular communication system with a single point transmitter located distance $\left\| {\textbf{x}} \right\|$ away from  the center of a passive receiver with radius $r_r$, the local point concentration at the center of the passive receiver at time $T$ due to a single  pulse emission by the transmitter  is given as \cite[Eq. (4.28)]{nelson2004biological}
\begin{align}
C\left( {\left. {{\Omega _{{r_r}}},T} \right|\left\| {\textbf{x}} \right\|} \right) = \frac{1}{{{{\left( {4\pi DT} \right)}^{{3 \mathord{\left/
 {\vphantom {3 2}} \right.
 \kern-\nulldelimiterspace} 2}}}}}\exp \Big( { - \frac{{{{ \left\| {\textbf{x}} \right\| }^2}}}{{4DT}}} \Big). \label{Fraction_d_pass}
\end{align}

The fraction of  information molecules observed  inside the passive receiver with volume  ${V_{{\Omega _{{r_r}}}}}$ at time $T$  is denoted as
\begin{align}
{F^{\rm{PS}}}\left( {\left. {{\Omega _{{r_r}}},T} \right|{\left\| \textbf{x} \right\|}} \right) & = \int\limits_{{V_{{\Omega _{{r_r}}}}}} {C\left( {\left. {{\Omega _{{r_r}}},T} \right|\left\| {\textbf{x}} \right\|} \right) d{{V_{{\Omega _{{r_r}}}}}}} , \label{PS_nonuniform}
\end{align}
where ${V_{{\Omega _{{r_r}}}}}$ is the volume of the spherical passive receiver.
 


According to \eqref{PS_nonuniform} and Theorem 2 in \cite{noel2013using},   the fraction ${F^{\rm{PS}}}$ of  information molecules observed  inside the passive receiver at time $T$ due to a single  pulse emission by a transmitter at time $t$  is derived  as
\begin{align}
&{F^{\rm{PS}}}\left( {\left. {{\Omega _{{r_r}}},T} \right|{\left\| \textbf{x} \right\|}} \right)  =  \frac{1}{2}\left[ {{\rm{erf}}\Big( {\frac{{{r_r} - \left\| {\bf{x}} \right\|}}{{2\sqrt {DT} }}} \Big) + {\rm{erf}}\Big( {\frac{{{r_r} + \left\| {\bf{x}} \right\|}}{{2\sqrt {DT} }}} \Big)} \right] 
\nonumber \\ &+{\frac{{\sqrt{DT}}}{\sqrt{\pi }{\left\| {\bf{x}} \right\|}}} \left[ {\exp \Big( { - \frac{{{{\left( {{r_r} + \left\| {\bf{x}} \right\|} \right)}^2}}}{{4DT}}} \Big) - \exp \Big( { - \frac{{{{\left( {\left\| {\bf{x}} \right\| - {r_r}} \right)}^2}}}{{4DT}}} \Big)} \right], \label{PS_nonuniform_der}
\end{align}
which does \emph{not} assume that the molecule concentration inside the passive receiver is uniform.  This is 
unlike the common assumption that the concentration of molecule inside the passive receiver is uniform. Although that assumption is commonly applied, it relies on the receiver being sufficiently far from the transmitter (see  \cite{noel2013using}), which we cannot guarantee here since the transmitters are placed randomly.


In the large-scale molecular communication system with a passive receiver centered at the  origin, the expected number of  molecules observed inside the receiver at time $T$ due to a single  pulse emission at \emph{all} active transmitters at $t=0$ is given as 
\begin{align}
{\mathbb{E}}&\left\{ {N_{\rm{all}}^{{\rm{PS}}}\left( { {{\Omega _{{r_r}}},T} } \right)} \right\} = {\mathbb{E}}\left\{\sum\limits_{{\Phi _a}}{N_{\rm{tx}}^{\rm{PS}}}{F^{\rm{PS}}}\left( {\left. {{\Omega _{{r_r}}},T} \right|{\left\| \textbf{x} \right\|}} \right)\right\} 
\nonumber\\ & =\underbrace {N_{{\rm{tx}}}^{{\rm{PS}}}F_{\rm{u}}^{{\rm{PS}}}\left( {\left. {{\Omega _{{r_r}}},T} \right|\left\| {{{\bf{x}}^*}} \right\|} \right)}_{{\mathbb{E}}_{\rm{u}}^{{\rm{PS}}}} +\underbrace {N_{{\rm{tx}}}^{{\rm{PS}}}F_{\rm{I}}^{{\rm{PS}}}\left( {\left. {{\Omega _{{r_r}}},T} \right|\left\| {{{\bf{x}}}} \right\|} \right)}_{{\mathbb{E}}_{\rm{I}}^{{\rm{PS}}}}
,
\end{align}
where  ${F^{\rm{PS}}}\left( {\left. {{\Omega _{{r_r}}},T} \right|{\left\| \textbf{x} \right\|}} \right)$  is given in \eqref{PS_nonuniform_der},
%
 ${{\mathbb{E}}_{\rm{u}}^{{\rm{PS}}}}$ is the fraction of  molecules observed inside the receiver at time $T$ due to the nearest transmitter, and ${{\mathbb{E}}_{\rm{I}}^{{\rm{PS}}}}$ is the fraction of  molecules observed inside the receiver at time $T$ due to the other (interfering) transmitters.

\section{Receiver Observations}

In this section, we first derive  the distance distribution between the receiver and the nearest point transmitter. By doing so, we derive  exact expressions for the expected number of  molecules observed inside  the receiver  due to the nearest point transmitter and that due to the interfering transmitters. 

\subsection{Distance Distribution}
Unlike the  stochastic geometry modelling of wireless networks, where the transmitters are randomly located in the unbounded space, the point transmitters in a large-scale molecular communication system can only be distributed \emph{outside} the surface of the spherical receiver. Taking into account the minimum distance $r_r$ between point transmitters and the   receiver center, we derive the  probability density function (PDF) of the shortest distance between a point transmitter and the receiver in the following proposition.

\begin{ppro}
The PDF of the shortest distance between any point transmitter and the receiver  in 3D space  is given by 
\begin{align}
{f_{\left\| {{x^*}} \right\|}}(x) = 4\lambda_a\pi {x^2}{e^{ - \lambda_a \left( {\frac{4}{3}\pi {x^3} - \frac{4}{3}\pi {r_r}^3} \right)}}, \label{shortest_dist}
\end{align}
where $\lambda_a =\lambda \rho_{a}$.
\end{ppro}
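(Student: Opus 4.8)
The plan is to obtain the PDF in \eqref{shortest_dist} by first computing the cumulative distribution function (CDF) of $\left\| x^* \right\|$ through the void (empty-space) probability of the active HPPP $\Phi_a$, and then differentiating. The single ingredient I would invoke is the defining property of a homogeneous Poisson point process: the number of points of $\Phi_a$ falling in any bounded region of volume $V$ is Poisson distributed with mean $\lambda_a V$, so the probability that the region contains no point equals $e^{-\lambda_a V}$.

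The key step is to translate the event $\{\left\| x^* \right\| > x\}$ into a statement about an empty region. Since by \eqref{cell_selection_II} the nearest active transmitter lies at distance $\left\| x^* \right\|$, and since transmitters are excluded from the receiver volume $V_{\Omega_{r_r}}$ altogether, no transmitter can lie closer than $r_r$. Hence $\left\| x^* \right\| > x$ (for $x \ge r_r$) holds if and only if no point of $\Phi_a$ falls in the spherical shell $\{\,y : r_r < \left\| y \right\| \le x\,\}$. The volume of this shell is $\tfrac{4}{3}\pi x^3 - \tfrac{4}{3}\pi r_r^3$, so the void probability gives
\begin{align}
\mathbb{P}\big(\left\| x^* \right\| > x\big) = e^{ - \lambda_a \left( \frac{4}{3}\pi x^3 - \frac{4}{3}\pi r_r^3 \right)}, \qquad x \ge r_r,
\end{align}
and therefore $F_{\left\| x^* \right\|}(x) = 1 - e^{ - \lambda_a ( \frac{4}{3}\pi x^3 - \frac{4}{3}\pi r_r^3 )}$ on $[r_r,\infty)$.

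To finish, I would differentiate the CDF with respect to $x$. Applying the chain rule, the derivative of the exponent $-\lambda_a(\tfrac{4}{3}\pi x^3 - \tfrac{4}{3}\pi r_r^3)$ is $-\lambda_a \cdot 4\pi x^2$, which pulls down the factor $4\lambda_a \pi x^2$ and reproduces exactly the claimed expression \eqref{shortest_dist}. As a consistency check I would verify that $f_{\left\| x^* \right\|}$ integrates to unity over $[r_r,\infty)$, which follows immediately since the primitive of the right-hand side is $-e^{-\lambda_a(\frac{4}{3}\pi x^3 - \frac{4}{3}\pi r_r^3)}$, evaluating to $1$ at the endpoints.

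I do not expect a serious obstacle here; the only real subtlety is geometric rather than analytic. One must correctly identify the relevant empty region as a spherical \emph{shell} with inner radius $r_r$ rather than a full ball centered at the origin, which reflects the exclusion of the receiver volume. This is precisely what produces the $\tfrac{4}{3}\pi r_r^3$ term in the exponent and distinguishes the result from the classical unbounded-space nearest-neighbor distance distribution, where transmitters may lie arbitrarily close to the origin.
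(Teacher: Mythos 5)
Your derivation is correct and is the standard argument for this result: express $\mathbb{P}(\left\| x^* \right\| > x)$ as the void probability of $\Phi_a$ over the spherical shell of volume $\tfrac{4}{3}\pi x^3 - \tfrac{4}{3}\pi r_r^3$, then differentiate the resulting CDF, which is exactly the route the paper's Appendix A takes. The geometric point you flag --- that the empty region is a shell with inner radius $r_r$ because transmitters are excluded from the receiver volume --- is indeed the only subtlety, and you handle it correctly.
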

\begin{proof}
See Appendix A.
\end{proof}

Based on the proof of Proposition 1, we also derive the PDF of the shortest distance between any point transmitter and the receiver  in 2D space in the following lemma.
\begin{cor}
The PDF of the shortest distance between any  point transmitter and the receiver  in 2D space  is given by 
\begin{align}
{f_{\left\| {{x^*}} \right\|}}(x) = 2\lambda_a \pi r{e^{ - \lambda_a \left( {\pi {r^2} - \pi {r_r}^2} \right)}}, \label{shortest_dist_2D}
\end{align}
where $\lambda_a =\lambda \rho_{a}$.
\end{cor}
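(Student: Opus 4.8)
The plan is to mirror the argument establishing Proposition~1 in Appendix~A, replacing the three-dimensional spherical geometry with its two-dimensional (planar) analogue. First I would characterize the complementary cumulative distribution function of the shortest distance through the void probability of the point process: the event $\{\|x^*\| > x\}$ is exactly the event that no active transmitter falls in the annular region bounded by the receiver disk of radius $r_r$ and the concentric disk of radius $x$. Because the active transmitters form a homogeneous Poisson point process $\Phi_a$ with intensity $\lambda_a = \lambda\rho_a$, the number of points lying in any region of area $A$ is Poisson-distributed with mean $\lambda_a A$, and hence the probability that such a region is empty equals $e^{-\lambda_a A}$.

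Next I would evaluate the area of the relevant annulus in the plane, namely $A = \pi x^2 - \pi r_r^2$ for $x \ge r_r$; this quantity plays the role that the spherical-shell volume $\tfrac{4}{3}\pi x^3 - \tfrac{4}{3}\pi r_r^3$ plays in the three-dimensional proof. Substituting gives the complementary distribution $\Pr\{\|x^*\| > x\} = e^{-\lambda_a(\pi x^2 - \pi r_r^2)}$, so that the cumulative distribution function is $F_{\|x^*\|}(x) = 1 - e^{-\lambda_a(\pi x^2 - \pi r_r^2)}$ for $x \ge r_r$ and zero otherwise.

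Finally, differentiating the cumulative distribution with respect to $x$ yields the density. The chain rule contributes the factor $\frac{d}{dx}\big[\lambda_a(\pi x^2 - \pi r_r^2)\big] = 2\lambda_a \pi x$, giving $f_{\|x^*\|}(x) = 2\lambda_a \pi x\, e^{-\lambda_a(\pi x^2 - \pi r_r^2)}$, which agrees with the claimed expression (the symbol $r$ appearing in the statement serving as the running radial variable).

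Since each step is a direct dimensional specialization of the already-proven three-dimensional argument, I do not anticipate a genuine obstacle. The only point demanding care is the bookkeeping of the minimum-distance constraint: the excluded receiver disk of area $\pi r_r^2$ must be subtracted from the annular area so that the resulting density integrates to unity over $[r_r, \infty)$ and correctly reflects that no transmitter can be placed inside the receiver.
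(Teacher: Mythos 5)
Your proposal is correct and follows essentially the same route as the paper: the corollary is obtained exactly as in Proposition~1, by computing the void probability of the HPPP over the annulus between the receiver boundary at $r_r$ and radius $x$ (area $\pi x^2 - \pi r_r^2$ replacing the spherical-shell volume) and differentiating the resulting CCDF. You also correctly flag the only real wrinkle, namely the paper's notational slip of writing $r$ for the running variable $x$ in \eqref{shortest_dist_2D}.
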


\subsection{Absorbing Receiver Observations}
In this subsection, we derive a closed-form  expression for  the  expected  number of  molecules observed inside the absorbing receiver in 3D space.
 
 Using  Campbell’s theorem, we  derive  the expected number of absorbed molecules due to the  nearest transmitter   and  that due to the interfering transmitters until time $t$ as
\begin{align}
{{{\mathbb{E}}_{\rm{u}}^{{\rm{FA}}}}}= &{N_{\rm{tx}}^{\rm{FA}}}\int_{{r_r}}^\infty  {{\Pi_0^t} \left( x \right)4{\lambda _a}\pi {x^2}}  
{e^{ - {\lambda_a} \left( {\frac{4}{3}\pi {x^3} - \frac{4}{3}\pi {r_r}^3} \right)}}dx
, \label{E_u_FA}
\end{align}
and
\begin{align}
\hspace{-0.2cm}{{{\mathbb{E}}_{\rm{I}}^{{\rm{FA}}}}} =& {N_{\rm{tx}}^{\rm{FA}}}\left(4\pi {\lambda _a} \right)^2{  \int_{{r_r}}^\infty  {\int_x^\infty  {\Pi_0^t} \left( r \right)} } {r^2}dr
 {x^2}{e^{ - {\lambda _a}\left( {\frac{4}{3}\pi {x^3} - \frac{4}{3}\pi {r_r}^3} \right)}}dx, \label{E_I_FA}
\end{align}
respectively.

\begin{theorem}
The   expected  net number of  molecules absorbed at the  absorbing receiver in 3D space during any sampling time interval $[t,t+T_{ss}]$ is derived as
\begin{align}
{\mathbb{E}}&\left\{ N_{{\rm{all}}}^{{\rm{FA}}}\left( { {{\Omega _{{r_r}}},t,t + {T_{ss}}} } \right) \right\}  
\nonumber\\&= 4{N_{{\rm{tx}}}^{\rm{FA}}}\sqrt \pi  {\lambda _a}{r_r}\left[ {D\sqrt \pi  {T_{ss}} + 2\sqrt D {r_r}\left( {\sqrt {{T_{ss}} + t}  - \sqrt t } \right)} \right] .  \label{Total_FA_final}
\end{align}
The   expected   number of  molecules observed inside the fully absorbing receiver in 3D space until time $t$ is derived as
\begin{align}
{\mathbb{E}}&\left\{ N_{{\rm{all}}}^{{\rm{FA}}}\left( { {{\Omega _{{r_r}}},t} } \right) \right\}  
= 4{N_{{\rm{tx}}}^{\rm{FA}}}\sqrt \pi  {\lambda _a}{r_r}\left[ {D\sqrt \pi  t + 2{r_r}\sqrt {D t} } \right] .  \label{Total_FA_final_t}
\end{align}
\end{theorem}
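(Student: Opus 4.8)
The plan is to bypass the separate nearest/interference decomposition and obtain the total count directly, then recover the sampling-interval result by differencing. The key observation is that the cumulative count is the sum of $N_{\rm tx}^{\rm FA}\Pi_0^t(\|\mathbf{x}\|)$ over all points of the active process, so Campbell's theorem for the HPPP of intensity $\lambda_a$ on $\mathbb{R}^3\setminus V_{\Omega_{r_r}}$ gives
\[
\mathbb{E}\left\{N_{\rm all}^{\rm FA}(\Omega_{r_r},t)\right\}=N_{\rm tx}^{\rm FA}\lambda_a\int_{\mathbb{R}^3\setminus V_{\Omega_{r_r}}}\Pi_0^t(\|\mathbf{x}\|)\,d\mathbf{x}.
\]
Passing to spherical coordinates collapses the angular integral to $4\pi$ and, after inserting $\Pi_0^t(x)=\frac{r_r}{x}\erfc\{(x-r_r)/\sqrt{4Dt}\}$ from \eqref{Fraction_x}, leaves the single radial integral $4\pi r_r\int_{r_r}^\infty x\,\erfc\{(x-r_r)/\sqrt{4Dt}\}\,dx$.

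For completeness I would first check that this quantity coincides with $\mathbb{E}_{\rm u}^{\rm FA}+\mathbb{E}_{\rm I}^{\rm FA}$ from \eqref{E_u_FA} and \eqref{E_I_FA}. Swapping the order of the double integral in \eqref{E_I_FA} over the region $r_r<x<r$ and using $x^2 e^{-\lambda_a V(x)}=-\frac{1}{4\pi\lambda_a}\frac{d}{dx}e^{-\lambda_a V(x)}$, with $V(x)=\frac{4}{3}\pi(x^3-r_r^3)$, yields $\mathbb{E}_{\rm I}^{\rm FA}=N_{\rm tx}^{\rm FA}4\pi\lambda_a\int_{r_r}^\infty\Pi_0^t(x)x^2\big(1-e^{-\lambda_a V(x)}\big)\,dx$; the $e^{-\lambda_a V(x)}$ piece cancels exactly against $\mathbb{E}_{\rm u}^{\rm FA}$, leaving precisely the Campbell integral. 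This confirms that the split recombines without loss.

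The substantive calculation is the radial integral. Substituting $u=(x-r_r)/\sqrt{4Dt}$ rewrites it as $\sqrt{4Dt}\,r_r\big[r_r\int_0^\infty\erfc(u)\,du+\sqrt{4Dt}\int_0^\infty u\,\erfc(u)\,du\big]$. Two integrations by parts, using $\frac{d}{du}\erfc(u)=-\frac{2}{\sqrt\pi}e^{-u^2}$ and the decay of $u\,\erfc(u)$ and $u^2\,\erfc(u)$ at infinity, give $\int_0^\infty\erfc(u)\,du=1/\sqrt\pi$ and $\int_0^\infty u\,\erfc(u)\,du=1/4$. Assembling and simplifying with $\sqrt{4Dt}=2\sqrt{Dt}$ produces $4N_{\rm tx}^{\rm FA}\sqrt\pi\,\lambda_a r_r\big[\sqrt\pi\,Dt+2r_r\sqrt{Dt}\big]$, which is \eqref{Total_FA_final_t}.

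Finally, because the absorbed count is monotone in time, the net number absorbed during $[t,t+T_{ss}]$ is simply the increment, so I would subtract the cumulative expressions evaluated at $t+T_{ss}$ and at $t$. The $Dt$ terms combine to $D\,T_{ss}$ and the $\sqrt{Dt}$ terms to $\sqrt{D}\big(\sqrt{t+T_{ss}}-\sqrt t\big)$, giving \eqref{Total_FA_final}. The only non-mechanical steps are the two $\erfc$ integrals, and the only conceptual care is justifying that the nearest-plus-interference split reduces to the plain Campbell integral; since the manipulation above makes that explicit, I do not anticipate a real obstacle.
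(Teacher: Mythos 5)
Your proof is correct and follows essentially the same route as the paper: Campbell's theorem applied to the HPPP on $\mathbb{R}^3\setminus V_{\Omega_{r_r}}$ reduces everything to the radial integral $4\pi\lambda_a r_r\int_{r_r}^{\infty}x\,\erfc\{(x-r_r)/\sqrt{4Dt}\}\,dx$, which you evaluate correctly via the standard $\erfc$ moments $\int_0^\infty\erfc(u)\,du=1/\sqrt{\pi}$ and $\int_0^\infty u\,\erfc(u)\,du=1/4$, and the sampling-interval formula follows by differencing the monotone cumulative count. Your explicit check that the nearest-plus-interference decomposition in \eqref{E_u_FA} and \eqref{E_I_FA} recombines into the plain Campbell integral is a nice touch that the paper leaves implicit, but it does not change the substance of the argument.
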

\begin{proof}
See Appendix B.
 \end{proof}
 
 From Theorem 1,  we find that the  expected number of  molecules absorbed at the absorbing receiver  at time $t$ is linearly proportional to the density of active transmitters, and increases with increasing  diffusion coefficient or receiver radius. As expected, we find that the expected number of  molecules absorbed at the receiver until  $ t$  is always increasing with time $ t$.


\subsection{Passive Receiver Observations}
In the following theorem, we derive   the expected  number of   molecules   observed inside the passive receiver in 3D space.

Using  Campbell’s theorem, we  derive  the expected number of observed molecules due to the  nearest transmitter   and  that due to the interfering transmitters at time $t$ as
\begin{align}
&{{{\mathbb{E}}_{\rm{u}}^{{\rm{PS}}}}}= 4{\lambda _a}\pi N_{{\rm{tx}}}^{{\rm{PS}}}{e^{\frac{4}{3}\pi {r_r}^3{\lambda _a}}}\int_{{r_r}}^\infty  {\Phi \left( x \right){x^2}\exp \Big\{ { - \frac{4}{3}\pi {x^3}{\lambda _a}} \Big\}dx} 
, \label{E_u_PS}
\end{align}
and
\begin{align}
{{{\mathbb{E}}_{\rm{I}}^{{\rm{PS}}}}} =&  {\left( {4\pi {\lambda _a}} \right)^2}{e^{\frac{4}{3}\pi {r_r}^3{\lambda _a}}}N_{{\rm{tx}}}^{\rm{PS}}\int_{{r_r}}^\infty  {\int_x^\infty  {\Phi \left( r \right)} }{r^2}dr 
 \nonumber\\& {x^2}{e^{ - \frac{4}{3}\pi {x^3}{\lambda _a}}}dx, \label{E_I_PS}
\end{align}
respectively. In \eqref{E_u_PS} and \eqref{E_I_PS}, ${\Phi \left( r \right)} ={F^{\rm{PS}}}\left( {\left. {{\Omega _{{r_r}}},t} \right|r} \right)$.

\begin{theorem}
The  expected net number of  molecules observed inside the passive receiver during any sampling time interval $[t,t+T_{ss}]$ in 3D space  is derived as
\begin{align}
&{\mathbb{E}}\left\{ {N_{{\rm{all}}}^{{\rm{PS}}}\left( {\left. {{\Omega _{{r_r}}},t,t + {T_{ss}}} \right|\left\| {\bf{x}} \right\|} \right)} \right\}= 4N_{\rm{tx}}^{\rm{PS}}\pi {\lambda _a}
\nonumber\\ &\left[ {\int_{{r_r}}^\infty  {{F^{{\rm{PS}}}}\left( {\left. {{\Omega _{{r_r}}},t + {T_{ss}}} \right|r} \right)} {r^2}dr} \right.
\left. { - \int_{{r_r}}^\infty  {{F^{{\rm{PS}}}}\left( {\left. {{\Omega _{{r_r}}},t} \right|r} \right)} {r^2}dr} \right]
, \label{Total_PS_final}
\end{align}
where   ${F^{\rm{PS}}}\left( {\left. {{\Omega _{{r_r}}},t} \right|{r}} \right)$  is given in \eqref{PS_nonuniform_der}.

The  expected  number of  molecules observed inside the passive receiver at time $t$ in 3D space  is derived as
\begin{align}
&{\mathbb{E}}\left\{ {N_{{\rm{all}}}^{{\rm{PS}}}\left( {\left. {{\Omega _{{r_r}}},0,t } \right|\left\| {\bf{x}} \right\|} \right)} \right\}= 
\nonumber\\ & \hspace{2cm}4N_{\rm{tx}}^{\rm{PS}}\pi {\lambda _a}{\int_{{r_r}}^\infty  {{F^{{\rm{PS}}}}\left( {\left. {{\Omega _{{r_r}}},t } \right|r} \right)} {r^2}dr} 
. \label{Total_PS_final_t}
\end{align}

\end{theorem}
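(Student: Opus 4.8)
The plan is to obtain \eqref{Total_PS_final_t} directly from Campbell's theorem and then recover \eqref{Total_PS_final} by linearity. Starting from the definition in the system model, the expected total observation is the mean of the sum $\sum_{x\in\Phi_a} N_{\rm{tx}}^{\rm{PS}}\,{F^{\rm{PS}}}\left(\left.{{\Omega_{{r_r}}},t}\right|\left\|\mathbf{x}\right\|\right)$ over the active process $\Phi_a$, which is homogeneous with intensity $\lambda_a$ on $\mathbb{R}^3\setminus V_{\Omega_{r_r}}$. Since the summand depends on $\mathbf{x}$ only through $r=\left\|\mathbf{x}\right\|$, Campbell's theorem converts the mean sum into $\lambda_a\int_{\mathbb{R}^3\setminus V_{\Omega_{r_r}}} N_{\rm{tx}}^{\rm{PS}}\,{F^{\rm{PS}}}(\Omega_{r_r},t|\left\|\mathbf{x}\right\|)\,d\mathbf{x}$. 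Passing to spherical coordinates, the angular part contributes a factor $4\pi$ and the radial part runs from $r_r$ to $\infty$ with weight $r^2$, giving exactly \eqref{Total_PS_final_t}. The net count over $[t,t+T_{ss}]$ is then the instantaneous count at $t+T_{ss}$ minus that at $t$, and subtracting two copies of \eqref{Total_PS_final_t} yields the bracketed difference in \eqref{Total_PS_final}.

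To keep the argument consistent with the nearest/interference split already recorded in \eqref{E_u_PS}--\eqref{E_I_PS}, I would alternatively verify that $\mathbb{E}_{\rm u}^{\rm PS}+\mathbb{E}_{\rm I}^{\rm PS}$ collapses to the same integral. Writing $c=\tfrac{4}{3}\pi\lambda_a$ so that $4\pi\lambda_a=3c$ and the nearest-distance PDF of Proposition~1 reads $f_{\left\|x^*\right\|}(x)=3c\,x^2 e^{-c(x^3-r_r^3)}$, the two contributions share the common prefactor $4\pi\lambda_a N_{\rm tx}^{\rm PS}e^{c r_r^3}$ multiplying $\int_{r_r}^\infty x^2 e^{-cx^3}\big[\Phi(x)+3c\int_x^\infty \Phi(r)r^2\,dr\big]\,dx$, where $\Phi(r)={F^{\rm PS}}(\Omega_{r_r},t|r)$.

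The one genuinely technical step is simplifying this double integral. I would swap the order of integration in the inner term over the region $\{r_r\le x\le r\}$ and use $\int_{r_r}^r x^2 e^{-cx^3}\,dx=\tfrac{1}{3c}\big(e^{-cr_r^3}-e^{-cr^3}\big)$. The factor $3c$ cancels, and the resulting $-\int_{r_r}^\infty \Phi(r)r^2 e^{-cr^3}\,dr$ term cancels the single integral of $\Phi(x)e^{-cx^3}x^2$ verbatim, leaving $e^{-cr_r^3}\int_{r_r}^\infty \Phi(r)r^2\,dr$. Multiplying by the prefactor reproduces \eqref{Total_PS_final_t}, confirming that the split into nearest plus interfering transmitters is exhaustive.

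I expect the main obstacle to be bookkeeping rather than analysis: unlike the absorbing case of Theorem~1, the radial integral $\int_{r_r}^\infty {F^{\rm PS}}(\Omega_{r_r},t|r)r^2\,dr$ is left unevaluated, so no special-function machinery is needed. Care is required only in (i) applying Campbell's theorem over the domain truncated at $r_r$ rather than over all of $\mathbb{R}^3$, which fixes the lower limit of the radial integral; and (ii) when using the decomposition, correctly invoking the conditioning argument that, given the nearest transmitter at distance $x$ (equivalently, an empty ball of radius $x$), the remaining interferers form a homogeneous PPP of intensity $\lambda_a$ outside radius $x$. This conditioning is what produces the inner limit $\int_x^\infty$ in \eqref{E_I_PS} and, together with the order-of-integration swap, drives the cancellation above.
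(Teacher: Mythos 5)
Your proposal is correct and follows essentially the same route as the paper: the paper's proof of Theorem~2 is declared ``analogous to Appendix~B without solving the integrals,'' i.e.\ an application of Campbell's theorem to the sum over the active HPPP restricted to $\mathbb{R}^3\setminus V_{\Omega_{r_r}}$, followed by the spherical-coordinate reduction to $4\pi\lambda_a\int_{r_r}^{\infty}F^{\mathrm{PS}}(\Omega_{r_r},t\,|\,r)\,r^2\,dr$ and a difference of two such terms for the net count. Your additional verification that $\mathbb{E}_{\rm u}^{\rm PS}+\mathbb{E}_{\rm I}^{\rm PS}$ collapses to the same integral (via the order-of-integration swap and the cancellation of the $e^{-cr^3}$ terms) is algebraically sound and goes slightly beyond what the paper records, but it does not change the underlying argument.
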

\begin{proof}
Analogous to Appendix B without solving the integrals.
\end{proof}
In Theorem 2, we observe that the expected number of  molecules observed inside the passive receiver also increases proportionately with the density of active transmitters.

%

\section{Numerical and Simulation Results}

In this section, we examine the expected number of molecules observed at the absorbing receiver and the passive receiver due to simultaneous single pulse emissions at all active point transmitters. In all figures of this section,  we set the parameters as follows: $r_r = 5\,\micron$ and $N_{\rm{tx}}^{\rm{FA}}= N_{\rm{tx}}^{\rm{PS}}=10^4$.  
In all figures,  the analytical curves of the expected number of  molecules absorbed at the absorbing receiver 
due to all the transmitters,  the nearest transmitter, and the interfering transmitters are plotted using Eqs. \eqref{Total_FA_final}, \eqref{E_u_FA},  and \eqref{E_I_FA}, and are abbreviated as “Absorbing All”, “Absorbing Nearest”, and “Absorbing Aggregate”, respectively. The analytical curves of the expected number of  molecules observed inside   the passive receiver due to all the transmitters, the nearest transmitter, the interfering transmitters  are plotted  using \eqref{Total_PS_final}, \eqref{E_u_PS},  and \eqref{E_I_PS}, and are abbreviated as “Passive All”, “Passive Nearest”, and “Passive Aggregate”, respectively.   

\subsection{Particle-Based  and Pseudo Simulation Validation}

In Fig. \ref{fig_accord_normalized}, we set   $D = 80\times10^{-12} \frac{\meter^2}{\second}$, and assume that the transmitters are placed up to $R = 50\,\micron$ from the center of the receiver at a density of $\lambda_a = 10^{-4}$ transmitters per $\micron^3$ (i.e.,  52 average number of transmitters, including the subtraction of the receiver volume).
 The receiver takes samples every $T_{ss} = 0.01\; \second$ and calculates the net change in the number of observed molecules between samples. The default simulation time step is also $0.01$ $\rm{\second}$. Unless otherwise noted, all simulation results were averaged over $10^4$ transmitter location permutations, with each permutation simulated at least 10 times.

In Fig. \ref{fig_accord_normalized}, we verify the analytical expressions for the expected net number of molecules observed during $[t,t+T_{ss}]$ at the absorbing receiver in Eq. \eqref{E_u_FA} and Eq. \eqref{E_I_FA}, and that inside the passive receiver in Eq. \eqref{E_u_PS} and Eq. \eqref{E_I_PS}  by comparing with the particle-based simulations and the Monte Carlo simulations. The  particle-based  simulations were performed by tracking the progress
of individual particles to obtain the net number of  observed molecules     during $[t,t+T_{ss}]$ using the AcCoRD simulator (Actor-based Communication via Reaction-Diffusion) \cite{noel2016_simulator}.  The pseudo simulations rely on the Monte Carlo simulation method, which were performed   by  averaging the  expected number of observed molecules  due to  all active transmitters with  randomly-generated location, as calculated from Eq. \eqref{Fraction_x} and Eq. \eqref{PS_nonuniform_der}, over $10^4$ realizations.

 In the right subplot of Fig.~\ref{fig_accord_normalized}, we compare passive and absorbing receivers and observe the expected  net number of observed molecules during $[t,t+T_{ss}]$ due to the nearest transmitter and due to the aggregation of the interfering transmitters. In the left subplot of Fig.~\ref{fig_accord_normalized}, we lower the simulation time step to $10^{-4}\,\second$ for the first few samples of the two absorbing receiver cases, in order to demonstrate the corresponding improvement in accuracy. All curves in both subplots are scaled by the maximum value of the corresponding analytical curve in the right subplot; the scaling values and other simulation parameters are summarized in Table~\ref{table_accord}.

\begin{figure}[!tb]
	\centering
	\includegraphics[width=\linewidth]{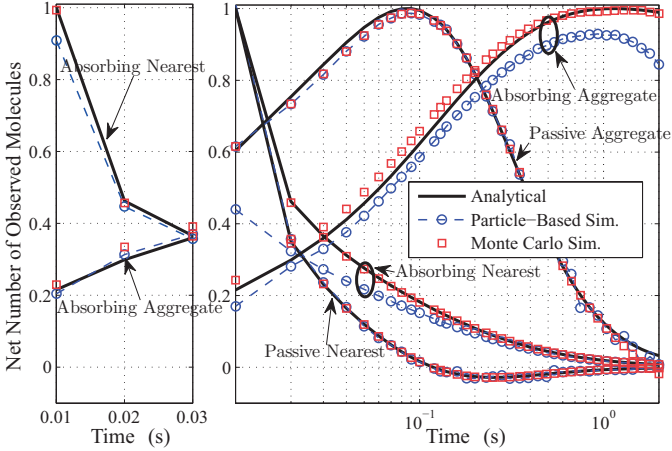}
	\caption{Net number of observed molecules inside the receiver as a function of time. All curves are scaled by the maximum value of the analytical curves in the right subplot.}
	\label{fig_accord_normalized}
\end{figure}

\begin{table}[!tb]
	\centering
	\caption{ The simulation parameters and scaling values applied in Fig.~\ref{fig_accord_normalized}.}
	{\renewcommand{\arraystretch}{0.8}
		\begin{tabular}{|c|c|c|c|c|}
			\hline
			 Transmitter & Receiver & Realizations & Time  & Scaling \\
			&  &  & Step [s] &Value \\ \hline
			Nearest & Passive & $10^4$ & $10^{-2}$ & 149.57 \\ \hline
			Nearest & Active & $10^4$ & $10^{-2}$ & 354.52 \\ \hline
			Aggregate & Passive & $10^4$ & $10^{-2}$ & 9.252 \\ \hline
			Aggregate & Active & $10^3$ & $10^{-3}$ & 59.42 \\ \hline
		\end{tabular}
	}
	\label{table_accord}
\end{table}

\subsubsection{Particle-Based  Simulation Validation}
Overall, there is good agreement between the analytical curves and the particle-based simulations in the right subplot of Fig.~\ref{fig_accord_normalized}. The analytical results for the net number of molecules observed  inside the passive receiver  during $[t,t+T_{ss}]$ due to  the nearest transmitter is highly accurate, and even captures the net loss of molecules observed after $t = 0.1\,\second$. There is a slight deviation in the particle-based simulation for the ``passive aggregate" curve, particularly as time approaches $t = 1\,\second$, which is primarily due to the very low number of molecules observed at this time (note that the scaling factor in this case is only $9.252$; see Table~\ref{table_accord}).

 There is less agreement between the particle-based simulations and the analytical expressions for the absorbing receiver, and this is primarily due to the large simulation time step (even though we used a smaller time step for the aggregate transmitter case in the right subplot; see Table II). To demonstrate the impact of the time step, the left subplot shows much better agreement for the absorbing receiver model by lowering the time step to $10^{-4}\,\second$. This improvement is especially true in the case of the nearest transmitter, as there is significant deviation between the particle-based simulation and the analytical expression for very early times in the right subplot.


\subsubsection{Monte Carlo Simulation Validation}
There is a good match between the analytical curves and the Monte Carlo simulations for the  net number of  molecules observed at both types of receiver during $[t,t+T_{ss}]$ due to the nearest transmitter, which can be attributed to the large number of  molecules and the shortest distance value compared with $R = 50\,\micron$ (as shown in Table~\ref{table_accord}). There is slight deviation in the Monte Carlo simulations for the expected number of  molecules observed inside both types of receiver due to the interfering transmitters, and this is primarily due to
  the restricted placement of transmitters to the maximum distance $R = 50\,\micron$. In Figs. \ref{Fig2},  better agreement between the analytical curves and pseudo simulation is achieved by increasing  the maximum placement distance  $R$.

Due to the extensive computational demands to simulate such large molecular communication environments, we assume that the particle-based simulations have sufficiently verified the analytical models. The remaining simulation results in Fig. \ref{Fig2}   is only  generated via Monte Carlo simulation.

\begin{figure}[!tb]
	\centering
	\includegraphics[width=\linewidth]{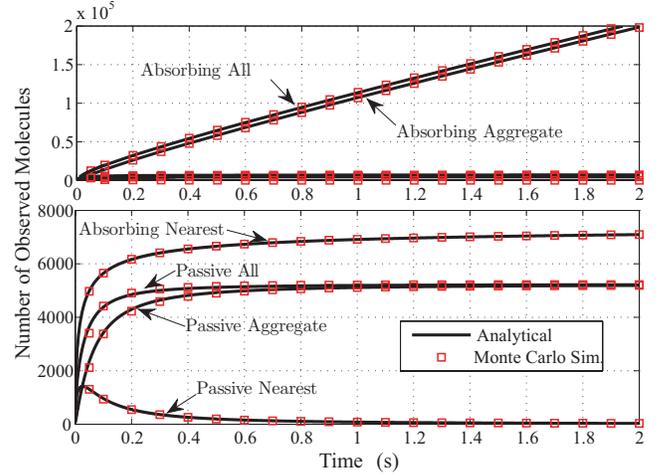}
	\caption{Expected number of  molecules observed inside the receiver  as a function of time. }
	\label{Fig2}
\end{figure}

\subsection{Performance Evaluation}
From Fig.~\ref{fig_accord_normalized} and  the scaling values in Table II,  we see that the expected net number of  molecules absorbed at  the absorbing receiver  is much larger than that inside the passive receiver, since every molecule arriving at the absorbing receiver is permanently absorbed.  We also notice that  the expected  net number of absorbed molecules due to the nearest transmitter  is much larger than that due to the  interfering transmitters, which may be due to a relatively low transmitter density.

Interestingly, the concurrent single pulse transmission by the transmitters at time $t=0$ results in a longer and stronger channel response at the absorbing receiver than that at the passive receiver. If the demodulation is based on the number of observed molecules during each bit interval, the longer channel response at the absorbing receiver may contribute to higher ISI  than at a passive receiver for the same bit interval,  whereas its stronger channel response may benefit  signal detection.

In Fig.~\ref{Fig2}, we set the parameters: $D = 120\times10^{-12} \frac{\meter^2}{\second}$, $R = 100\,\micron$, and $T_{ss} = 0.1\; \second$.
Fig.~\ref{Fig2}  plots the expected number of  molecules observed at the absorbing receiver and the passive receiver at time $t$. We set the density of active transmitters as $\lambda_a = 10^{-3} /\micron^{3}$.  
As shown in the lower subplot of  Fig.~\ref{Fig2},  the   channel responses of the receivers due to the nearest transmitter in this large-scale molecular communication system  are consistent with those observed at the absorbing receiver in   \cite[Fig. 4]{Yansha16} and the passive receiver in  \cite[Fig. 2]{Llatser13} and \cite[Fig. 1]{noel2014improving}   for a point-to-point molecular communication system.

In Fig.~\ref{Fig2}, we notice that the  expected   number of observed molecules at time $t$ due to all the transmitters is dominated by the interfering transmitters, rather than the nearest transmitter, which is due to the higher density of transmitters.
Furthermore, as we might expect, the  expected  number of  molecules observed inside the passive receiver at time $t$ stabilizes after $t=0.8\,\second$, whereas that at the absorbing receiver increases linearly with increasing time. This reveals the potential differences in optimal demodulation and interference cancellation design for these two types of receiver.


\section{Conclusions and Future Work}
In this paper, we provided a general model for the transmitter modelling in a large-scale molecular communication system using stochastic geometry.
The  collective signal strength at a fully absorbing receiver and a passive receiver are modelled and examined. We derived  tractable  expressions for the expected  number of observed molecules at the fully absorbing receiver and the passive receiver, which were shown to increase with transmitter density.
Our analytical results were validated through  particle-based simulation and Monte Carlo simulation.
The analytical model presented in this paper can also be applied for the performance evaluation of other types of receiver (e.g., partially absorbing, reversible adsorption receiver, ligand-binding receiver)   in large-scale molecular communication systems by substituting its corresponding channel response.

\appendices
\numberwithin{equation}{section}

\bibliographystyle{vancouver}
\bibliography{Ref}

\begin{thebibliography}{10}
\providecommand{\url}[1]{#1}
\csname url@samestyle\endcsname
\providecommand{\newblock}{\relax}
\providecommand{\bibinfo}[2]{#2}
\providecommand{\BIBentrySTDinterwordspacing}{\spaceskip=0pt\relax}
\providecommand{\BIBentryALTinterwordstretchfactor}{4}
\providecommand{\BIBentryALTinterwordspacing}{\spaceskip=\fontdimen2\font plus
\BIBentryALTinterwordstretchfactor\fontdimen3\font minus
  \fontdimen4\font\relax}
\providecommand{\BIBforeignlanguage}[2]{{%
\expandafter\ifx\csname l@#1\endcsname\relax
\typeout{** WARNING: IEEEtran.bst: No hyphenation pattern has been}%
\typeout{** loaded for the language `#1'. Using the pattern for}%
\typeout{** the default language instead.}%
\else
\language=\csname l@#1\endcsname
\fi
#2}}
\providecommand{\BIBdecl}{\relax}
\BIBdecl

\bibitem{EckfordBook13}
T.~Nakano, A.~Eckford, and T.~Haraguchi, \emph{Molecular communication}.\hskip
  1em plus 0.5em minus 0.4em\relax Cambridge University Press, 2013.

\bibitem{Codling08}
E.~Codling, M.~Plank, and S.~Benhamous, ``Random walk models in biology,''
  \emph{Journal of The Royal Society Interface}, vol.~5, no.~25, pp. 813--834,
  Aug. 2008.

\bibitem{Atkinson09}
S.~Atkingson and P.~Williams, ``Quorum sensing and social networking in the
  microbial world,'' \emph{Journal of The Royal Society Interface}, vol.~6,
  no.~40, pp. 959--978, Aug. 2009.

\bibitem{Llatser13}
I.~Llatser, A.~Cabellos-Aparicio, and M.~Pierobon, ``Detection techniques for
  diffusion-based molecular communication,'' \emph{IEEE Journal on Selected
  Areas in Communications (JSAC)}, vol.~31, pp. 726--734, Dec. 2013.

\bibitem{Guo15TMBMC}
W.~Guo, C.~Mias, N.~Farsad, and J.~Wu, ``Molecular versus electromagnetic wave
  propagation loss in macro-scale environments,'' \emph{IEEE Trans. Mol. Biol.
  Multi-Scale Commun.}, vol.~1, Mar. 2015.

\bibitem{Douglas12}
S.~M. Douglas, I.~Bachelet, and G.~M. Church, ``A logic-gated nanorobot for
  targeted transport of molecular payloads,'' \emph{Science}, vol. 335, no.
  6070, pp. 831--834, Feb. 2012.

\bibitem{Cavalcanti06}
A.~Cavalcanti, T.~Hogg, B.~Shirinzadeh, and H.~Liaw, ``Nanorobot communication
  techniques: a comprehensive tutorial,'' in \emph{Proc. IEEE Int. Conf.
  Control, Autom., Robot., Vis.}, Dec. 2006, pp. 1--6.

\bibitem{Kirkpatrick10}
C.~J. Kirkpatrick and W.~Bonfield, ``Nanobiointerface: a multidisciplinary
  challenge,'' \emph{Journal of The Royal Society Interface}, vol.~7, no. Suppl
  1, pp. S1--S4, Dec. 2009.

\bibitem{noel2014unify}
A.~Noel, K.~C. Cheung, and R.~Schober, ``A unifying model for external noise
  sources and isi in diffusive molecular communication,'' \emph{{IEEE} J. Sel.
  Areas Commun.}, vol.~32, no.~12, pp. 2330--2343, Dec 2014.

\bibitem{yilmaz2014simulation}
H.~B. Yilmaz and C.-B. Chae, ``Simulation study of molecular communication
  systems with an absorbing receiver: Modulation and {ISI} mitigation
  techniques,'' \emph{Simulat. Modell. Pract. Theory}, vol.~49, pp. 136--150,
  Dec. 2014.

\bibitem{Yansha16}
\BIBentryALTinterwordspacing
Y.~Deng, A.~Noel, M.~Elkashlan, A.~Nallanathan, and K.~C. Cheung, ``Modeling
  and simulation of molecular communication systems with a reversible
  adsorption receiver,'' \emph{arXiv}, 2016. [Online]. Available:
  \url{http://arxiv.org/abs/1601.00681}
\BIBentrySTDinterwordspacing

\bibitem{pierobon2014statistical}
M.~Pierobon and I.~F. Akyildiz, ``A statistical--physical model of interference
  in diffusion-based molecular nanonetworks,'' \emph{{IEEE} Trans. Commun.},
  vol.~62, no.~6, pp. 2085--2095, Jun. 2014.

\bibitem{baccelli2009stochastic}
F.~Baccelli and B.~Blaszczyszyn, \emph{Stochastic geometry and wireless
  networks: Volume 1: Theory}.\hskip 1em plus 0.5em minus 0.4em\relax Now
  Publishers Inc, 2009, vol.~1.

\bibitem{yan2016sensor}
Y.~Deng, L.~Wang, M.~Elkashlan, A.~Nallanathan, and R.~K. Mallik, ``Physical
  layer security in three-tier wireless sensor networks: {A} stochastic
  geometry approach,'' \emph{IEEE Trans. Inf. Forensics Security}, vol.~11,
  no.~6, pp. 1128--1138, Jun. 2016.

\bibitem{yan16hetnet}
Y.~Deng, L.~Wang, M.~Elkashlan, M.~Direnzo, and J.~Yuan, ``Modeling and
  analysis of wireless power transfer in heterogeneous cellular networks,''
  \emph{{IEEE} Trans. Commun.}, 2016.

\bibitem{Jeanson11}
S.~Jeanson, J.~Chadoeuf, M.~Madec, S.~Aly, J.~Floury, T.~F. Brocklehurst, and
  S.~Lortal, ``Spatial distribution of bacterial colonies in a model cheese,''
  \emph{Applied and Environmental Microbiology}, vol.~77, no.~4, pp.
  1493--1500, Dec. 2010.

\bibitem{Yilmaz14}
H.~B. Yilmaz, A.~C. Heren, T.~Tugcu, and C.-B. Chae, ``{Three-Dimensional
  channel characteristics for molecular communications with an absorbing
  receiver},'' \emph{IEEE Communications Letters}, vol.~18, no.~6, pp.
  929--932, Jun. 2014.

\bibitem{noel2014improving}
A.~Noel, K.~C. Cheung, and R.~Schober, ``Improving receiver performance of
  diffusive molecular communication with enzymes,'' \emph{IEEE Trans.
  Nanobiosci.}, vol.~13, no.~1, pp. 31--43, Mar. 2014.

\bibitem{cussler2009diffusion}
E.~L. Cussler, \emph{Diffusion: mass transfer in fluid systems}.\hskip 1em plus
  0.5em minus 0.4em\relax Cambridge university press, 2009.

\bibitem{nelson2004biological}
P.~Nelson, \emph{Biological Physics: Energy, Information, Life}, updated
  1st~ed.\hskip 1em plus 0.5em minus 0.4em\relax W. H. Freeman and Company,
  2008.

\bibitem{noel2013using}
A.~Noel, K.~C. Cheung, and R.~Schober, ``Using dimensional analysis to assess
  scalability and accuracy in molecular communication,'' in \emph{Proc. IEEE
  ICC MoNaCom}, Jun. 2013, pp. 818--823.

\bibitem{noel2016_simulator}
\BIBentryALTinterwordspacing
A.~Noel. (2016) Actor-based communication via reaction-diffusion. [Online].
  Available: \url{https://github.com/adamjgnoel/AcCoRD}
\BIBentrySTDinterwordspacing

\end{thebibliography}

\balance
\end{document}